\documentclass[10pt, conference, letterpaper]{IEEEtran}
\IEEEoverridecommandlockouts
% The preceding line is only needed to identify funding in the first footnote. If that is unneeded, please comment it out.
\usepackage{cite}
\usepackage{amsmath,amssymb,amsfonts,amsthm}
\usepackage{algorithmic}
\usepackage{graphicx}
\usepackage{textcomp}
\usepackage{xcolor}
\def\BibTeX{{\rm B\kern-.05em{\sc i\kern-.025em b}\kern-.08em
    T\kern-.1667em\lower.7ex\hbox{E}\kern-.125emX}}

\usepackage{subfig}
\usepackage{multirow}
\usepackage{capt-of}

\usepackage{hyperref}

\newtheorem{newrule}{Rule}
\newtheorem{theorem}{Theorem}

\setlength{\abovecaptionskip}{1pt}
\setlength{\belowcaptionskip}{1pt}

\setlength{\skip\footins}{3pt}

\IEEEaftertitletext{\vspace{-2.5\baselineskip}}

\begin{document}

\title{\textsc{Prophet}: Conflict-Free Sharding Blockchain via Byzantine-Tolerant Deterministic Ordering\vspace{-13pt}}

\author{\IEEEauthorblockN{Zicong Hong$^{1}$, Song Guo$^{1, 2}$, Enyuan Zhou$^{1}$, Jianting Zhang$^{3}$,\\ Wuhui Chen$^{4}$, Jinwen Liang$^{1}$, Jie Zhang$^{1}$, and Albert Zomaya$^{5}$}
    \IEEEauthorblockA{$^1$Department of Computing, The Hong Kong Polytechnic University, Hong Kong $^2$The Hong Kong Polytechnic University\\ 
    Shenzhen Research Institute, China $^3$Computer Science Department, Purdue University, USA\\ $^4$School of Computer Science and Engineering, Sun Yat-sen University, and Pengcheng Laboratory, Shenzhen, China\\
    $^5$School of Computer Science, The University of Sydney, Australia\\
	zicong.hong@connect.polyu.hk, song.guo@polyu.edu.hk, en-yuan.zhou@connect.polyu.hk, zhan4674@purdue.edu\\
	chenwuh@mail.sysu.edu.cn, \{jinwen.liang, jie1zhang\}@polyu.edu.hk, albert.zomaya@sydney.edu.au}

\thanks{
This research was supported by fundings from the Key-Area Research and Development Program of Guangdong Province under grant No. 2021B0101400003,  Hong Kong RGC Research Impact Fund (RIF) with the Project No. R5060-19, General Research Fund (GRF) with the Project No. 152221/19E, 152203/20E, 152244/21E, and 152169/22E, the National Natural Science Foundation of China 61872310, Shenzhen Science and Technology Innovation Commission (JCYJ20200109142008673), and the National Key Research and Development Plan (2021YFB2700302). 
% We thank all anonymous reviewers who helped improve the paper.
}
}

\maketitle

% Although the sharding guarantees that each cross-shard transaction can be committed in all its related shards, each shard separately packs and orders its transactions in each round.
% However, cross-shard transactions originating from different shards but involving the same contract are in conflict and thus be aborted. 
% Thus, the sharding systems often perform worse than the non-sharding ones. 

%少用It，They，These这种代词，不然会感觉指代不明。
%However单独成句，用于强调。It is difficult to achieve,可以再看看怎么表达。
% The phenomenon inspires the sharding systems to construct a deterministic global serial order for all cross-shard transactions to eliminate their contention caused by the non-deterministic events, however, it is difficult to achieve due to the intertwining of the Byzantine environment and information isolation among shards.

\begin{abstract}
Sharding scales throughput by splitting blockchain nodes into parallel groups. 
However, different shards' independent and random scheduling for cross-shard transactions results in numerous conflicts and aborts, since cross-shard transactions from different shards may access the same account. 
A deterministic ordering can eliminate conflicts by determining a global order for transactions before processing, as proved in the database field. 
Unfortunately, due to the intertwining of the Byzantine environment and information isolation among shards, there is no trusted party able to predetermine such an order for cross-shard transactions.
To tackle this challenge, this paper proposes \textsc{Prophet}, a conflict-free sharding blockchain based on \emph{Byzantine-tolerant deterministic ordering}. 
It first depends on untrusted self-organizing coalitions of nodes from different shards to pre-execute cross-shard transactions for prerequisite information about ordering. 
It then determines a trusted global order based on stateless ordering and post-verification for pre-executed results, through shard cooperation. 
Following the order, the shards thus orderly execute and commit transactions without conflicts. 
\textsc{Prophet} orchestrates the pre-execution, ordering, and execution processes in the sharding consensus for minimal overhead.
We rigorously prove the determinism and serializability of transactions under the Byzantine and sharded environment. 
An evaluation of our prototype shows that \textsc{Prophet} improves the throughput by $3.11\times$ and achieves nearly no aborts on 1 million Ethereum transactions compared with state-of-the-art sharding.
\end{abstract}

\vspace{-15pt}

% \begin{IEEEkeywords}
% blockchain, sharding, deterministic ordering, conflict-free
% \end{IEEEkeywords}

\section{Introduction}

The prosperity of blockchain facilitates 
% the decentralized cryptocurrency~\cite{bitcoin} and 
decentralized applications (dApps),
% is a piece of autonomous program that is deployed in blockchain and executes the predefined logic automatically and mandatorily, which has been widely applied in 
e.g., decentralized exchanges~\cite{werner2021sok} and non-fungible token~\cite{wang2021nonfungible}.
% , through smart contracts. 
In Ethereum, 
% the most popular blockchain for smart contracts, 
the number of contract calls per day has more than tripled to over 3 million from Jun. to Sep. in 2021~\cite{triple}. However, the poor scalability of the existing blockchains, 
7 transaction per seconds (TPS) in Bitcoin~\cite{bitcoin} and 
15-45 TPS in Ethereum~\cite{ethereum}, cannot satisfy this growing demand for dApps. 
% Different from the conventional blockchains supporting cryptocurrency (such as Bitcoin~\cite{bitcoin} and Litecoin~\cite{litecoin}), the blockchains supporting smart contracts (such as Ethereum~\cite{ethereum}) treat each transaction as the program execution of
% a function call for 
% smart contracts. 
% Such an execution costs the nodes much more computation and storage resource than the simple addition and subtraction of account balance. 
This is because their consensus requires all nodes to validate and execute every transaction, which aggravates the scalability problem and restricts smart contracts from more users and the dApps with more complex logic.
%the smart-contract empowered blockchain has the design of gas limit and ...
%each transaction costs its user a fee. 
%However, existing blockchain systems have poor scalability ... Since the consensus protocol of the existing blockchain systems involve all nodes ... 
%and they cannot meet a growing demand of smart contracts, which increases the gas cost and restricts the application of smart contracts. 

Sharding is one of the most promising technologies for scalability~\cite{sok_sharding}. 
It divides nodes into multiple consensus groups called \emph{shards} and distributes transactions to shards to process in parallel. The technology has been paid close attention by the academia~\cite{elastico, omniledger, rapidchain, monoxide, pyramid, pldi, byshard, brokerchain,sstore}. For the industry, some blockchains are being or have been upgraded to a sharding architecture, such as Zilliqa~\cite{zilliqa} and Eth2 upgrade in Ethereum~\cite{layer2}. 

% \begin{figure}[t]
% 	\centering
% 	\includegraphics[width=0.8\linewidth]{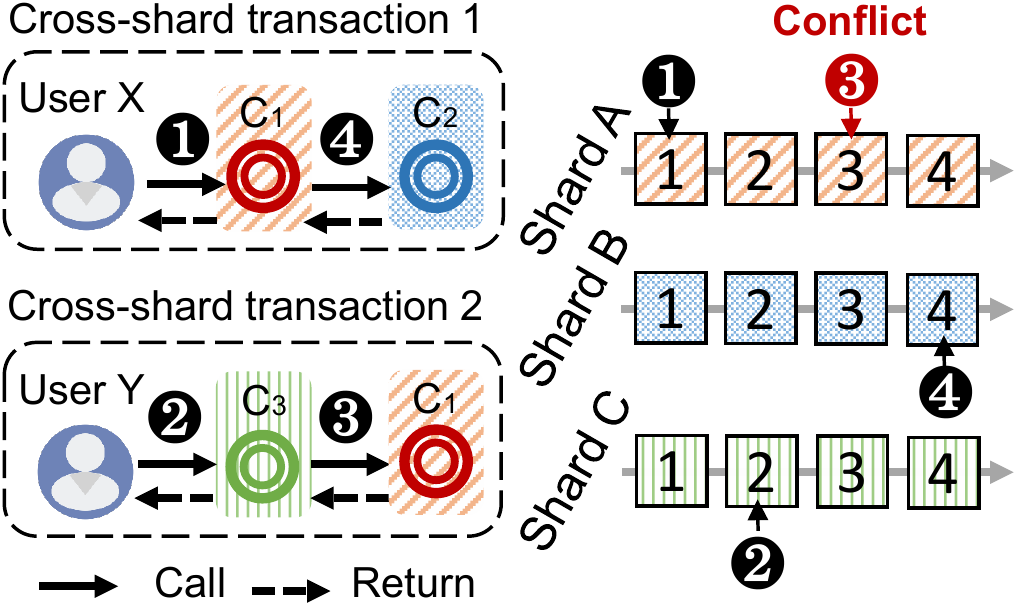}
% 	\caption{Illustration for inter-contract calls and conflicting cross-shard transactions. Each concentric circle represents a smart contract. Three contracts are located in three different shards. Each black circled number represents the block time at which a sub-transaction is committed.}
% 	\label{fig:race_event}
%     \vspace{-20pt}
% \end{figure}

\begin{figure}[t]
	\centering
	\subfloat[][]{
		\begin{minipage}[t]{0.31\linewidth}
			\centering
			\includegraphics[width=\linewidth]{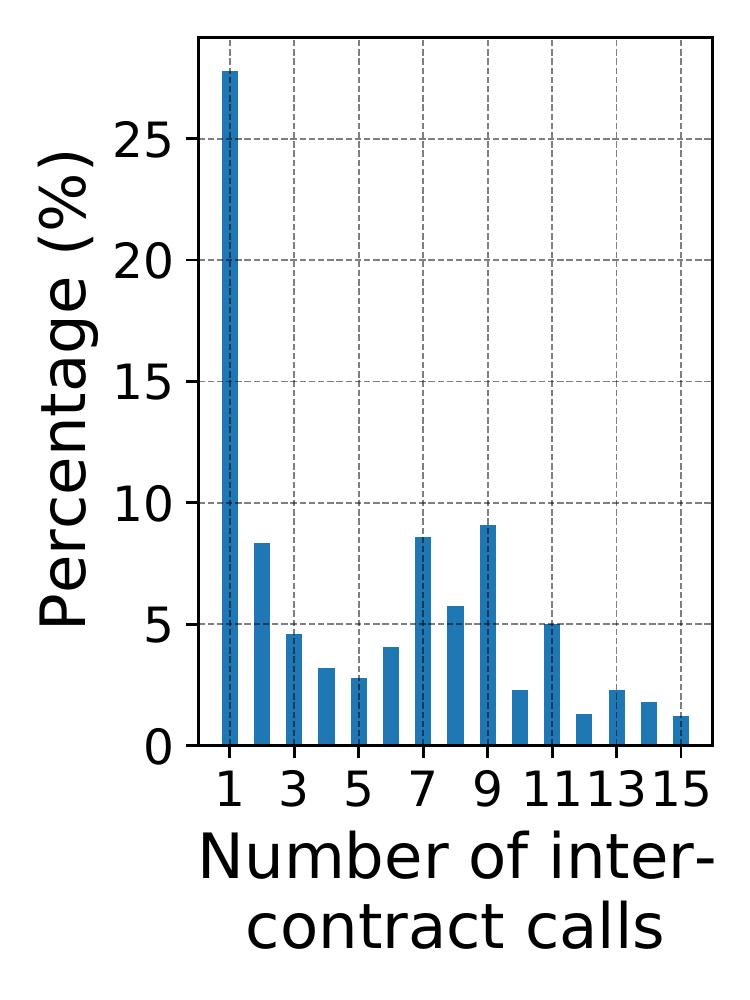}
		\end{minipage}
		\label{fig:inter_contract_call}
	}
% 	\hfill
	\subfloat[][]{
		\begin{minipage}[t]{0.65\linewidth}
			\centering
			\includegraphics[width=\linewidth]{race_event.pdf}
		\end{minipage}%
		\label{fig:race_event}
	}
	\centering
	\caption{(a) Percentage of Ethereum transactions with different number of inter-contract calls from Oct. 2020 to May. 2021, (b) Illustration for conflicting cross-shard transactions. Each concentric circle represents a smart contract. Three contracts are located in three different shards. Each circled number represents the round at which a sub-transaction is committed.}
    \vspace{-20pt}
\end{figure}

While increasing the throughput in proportion to the number of shards, sharding technology introduces \emph{cross-shard transactions}, which are transactions involving the smart contracts in multiple shards.
% Similar to the current business software composed of numerous programs, a dApp often requires the cooperation of multiple smart contracts. 
More seriously, similar to the current software composed of numerous programs, a dApp often requires the cooperation of several contracts, significantly increasing the number and complexity of cross-shard transactions. 
As shown in \autoref{fig:inter_contract_call}, more than 70\% of Ethereum~\cite{ethereum} transactions for smart contracts include more than $2$ inter-contract calls and the average number is $8.94$. 
After introducing sharding, the contracts are located in different shards thus the inter-contract calls result in massive cross-shard transactions. 
To guarantee the atomicity and consistency of cross-shard transactions, each sharding system needs a cross-shard mechanism. 
The mechanism divides each cross-shard transaction into several sub-transactions, each of which corresponds to a shard, and commits each sub-transaction to the corresponding shard.
% Then, the shards handle them based on two-phase commit (2PC) to guarantee the properties of ACID, i.e., atomicity, consistency, isolation, and durability, for the cross-shard transaction.

\emph{Unfortunately, we observe that the existing cross-shard mechanisms perform well below expectations in practice. 
As evaluated in \autoref{sec:evaluation}, an existing blockchain with 32 shards performs even worse than a non-sharding one when meeting the Ethereum transactions. }
It upends our usual understanding on sharding and drives us to reconsider the technology of sharding.
This poor performance results from inherent conflicts among cross-shard transactions and the independent and random scheduling for cross-shard transactions in different shards.
In particular, multiple cross-shard transactions may access the state of the same smart contract (e.g., Cxn.\footnote{In this paper, for simplification, we use Cxn. to denote cross-shard transaction, Txn. to denote single-shard transaction, and Blk. to denote block.} 1 and 2 access Contract $C_1$ in \autoref{fig:race_event}). 
Although Cxn.~1 is issued first, the state of $C_1$ can be modified by Cxn.~2 before Cxn.~1 is committed. 
The existing sharding systems mainly adopt two-phase locking (2PL) (e.g.,~\cite{omniledger,sigmod_sharding,chainspace}) or optimistic concurrency control (OCC) (e.g.,~\cite{monoxide,pyramid}) to avoid conflict and guarantee serializability of transactions.
% The problem also exists in the existing sharding frameworks for transfer, thus they have adopted 2PL and OCC . 
However, as evaluated in \autoref{sec:evaluation}, more than 50\% of transactions are aborted or rolled back due to race conditions, because real smart contract workloads contain frequent read-write conflicts. Therefore, both 2PL and OCC exhibit high abort rates and strongly limit the performance of blockchain sharding.

To eliminate the high abort rates caused by non-deterministic race conditions, an intuitive idea is to introduce a predetermined serial global order for pending transactions before processing them.
The idea is inspired by distributed and deterministic database with a sequencing layer, which collects all database transactions for producing a global order before database execution~\cite{calvin,aria,prescient, pwv, deterministic_overview}. 
However, the sequencing layer in distributed databases is designed with a strict assumption that the layer contains trusted machines for storing the whole database state, which is far from trivial for the blockchain. 
Due to the intertwining of the information isolation among shards (i.e., each node only stores a proportion of contracts) and Byzantine environment (i.e., blockchain nodes do not trust each other), there are no trusted nodes to predetermine such an order for transactions involving the state of different shards.

Therefore, we propose \textsc{Prophet}, a conflict-free sharding blockchain, based on a new idea named \emph{Byzantine-tolerant deterministic ordering}. 
Specifically, to overcome the challenge of information isolation, the nodes from different shards are allowed to form self-organizing coalitions to pre-execute pending transactions, including single-shard and cross-shard ones, for prerequisite information about ordering. 
Then, a random shard is delegated to sequence the pre-executed transactions for a global order based on prerequisite information. 
To deal with Byzantine failures in blockchain, a shard-cooperation proof sharing is proposed to verify and correct untrusted pre-execution results without interruption of consensus.
With such an order, transactions in different shards can be executed and committed orderly without conflicts.
The new architecture will not break any decentralization principle of blockchain.
% introduce additional storage overhead but utilize spare computation and communication resource via incentive. 
% Then, the deterministic order is produced through the cooperation and supervision among shards without introducing any trusted third party or specific powerful shard.
% (Analysis \& Experiment ...) 
% We emphasize that \textsc{Prophet} is \emph{back compatibility} which means most of existing sharding architectures can be upgraded to a similar deterministic way without changing the identity establishment, committee formation or intra-shard consensus.

The contributions of this work are summarized as follows.

\begin{itemize}
    \item We propose an idea of Byzantine-tolerant deterministic ordering and develop a conflict-free sharding blockchain named \textsc{Prophet}, minimizing the number of transaction aborts caused by non-deterministic contract contention. 
    \item On top of the shards, \textsc{Prophet} introduces two new types of parties, i.e., sequence shard and reconnaissance coalition, with a little additional overhead and designs a new cooperative consensus for a global order based on the cooperation and joint supervision among shards.
    \item Based on the characteristics of smart contracts, such as inter-contract calls and contract instructions, we present the designs of fine-grained ordering, asynchronous correction, and parallel pre-execution for efficiency.
    \item We develop a prototype for \textsc{Prophet} and conduct a comprehensive evaluation. 
    % The results show that 
    \textsc{Prophet}
    % achieves nearly no aborted transactions and 
    improves the throughput by $3.11\times$ (i.e., 1203 TPS) on 1 millions Ethereum transactions compared with state-of-the-art sharding systems.
\end{itemize}

\section{Background}
\label{sec:background}

\begin{figure*}[t]
	\centering
	\begin{minipage}{.65\linewidth}
	    \centering
    	\subfloat[][OCC-based cross-shard mechanism]{
    		\begin{minipage}[t]{0.98\linewidth}
    			\centering
    			\includegraphics[width=\linewidth]{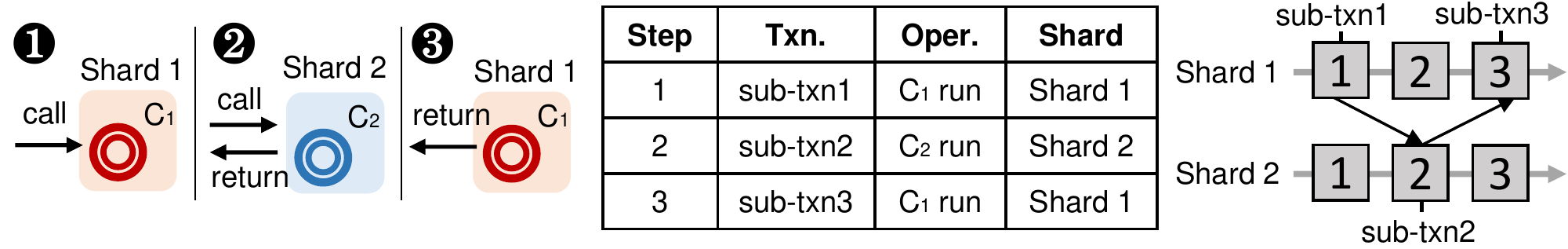}
    		\end{minipage}
    		\label{fig:related_work_1}
    	}
% 	\hfill
    \\\vspace{-8pt}
	\subfloat[][2PL-based cross-shard mechanism]{
		\begin{minipage}[t]{0.98\linewidth}
			\centering
			\includegraphics[width=\linewidth]{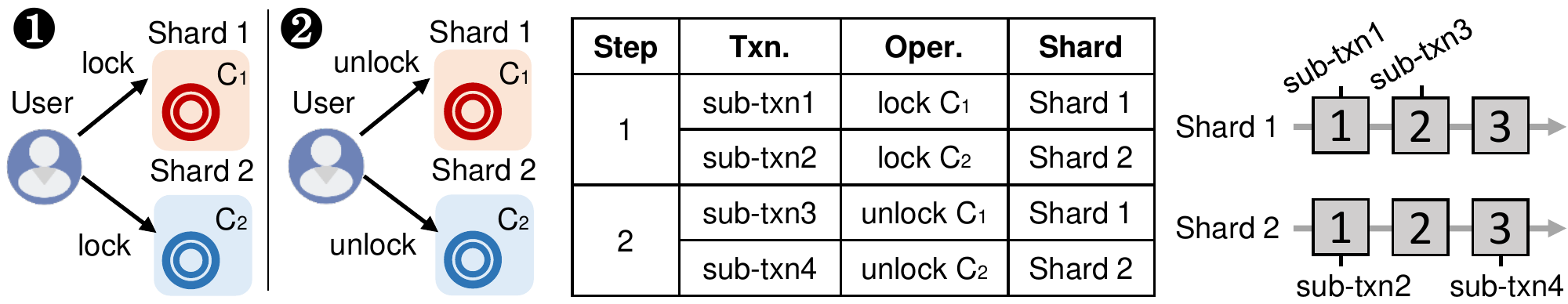}
		\end{minipage}%
		\label{fig:related_work_2}
	}
% 	\captionof{figure}{Illustration for the existing cross-shard mechanism for smart contracts.}
    \caption{Illustration for the existing cross-shard mechanism for smart contracts.}
	\end{minipage}
	\hfill
	\begin{minipage}{.34\linewidth}
    	\centering
    	\renewcommand{\arraystretch}{1.2}
        \begin{tabular}{l | c}
            \hline
            \textbf{System} & {\textbf{Type}} \\ \hline \hline
            Elastico~\cite{elastico}, CoSplit~\cite{pldi} & None \\\hline
            Omniledger~\cite{omniledger}, ByShard~\cite{byshard}  & \multirow{3}{*}{2PL} \\ 
            Chainspace~\cite{chainspace}, Rapidchain~\cite{rapidchain} & \\ 
            Dang \textit{et al.}~\cite{sigmod_sharding} & \\\hline
            Monoxide~\cite{monoxide}, Pyramid~\cite{pyramid} & OCC \\\hline
        \end{tabular}
        \vspace{+5pt}
        \captionof{table}{Cross-shard mechanisms of different sharding systems. (\cite{elastico,pldi} do not have cross-shard mechanisms since they do not achieve sharding for storage, i.e., there are no cross-shard transactions.)}
        \label{tab:comparison}
    \end{minipage}
	\label{fig:related_work}
    \vspace{-18pt}
\end{figure*}

Elastico~\cite{elastico} is the first public blockchain for transaction verification sharding, but it requires nodes to store the entire ledger. Omniledger~\cite{omniledger} and RapidChain~\cite{rapidchain} achieve state sharding, which means each shard only needs to store a proportion of the ledger. Different from the previous works for UTXO transactions, Monoxide~\cite{monoxide} is a blockchain sharding for account-based transactions. Some recent research efforts include smart contract sharding~\cite{chainspace, icde}, sharding blockchain database~\cite{blockchaindb}, dynamic sharding~\cite{skychain}, and permissioned blockchain sharding~\cite{sigmod_sharding, sharper}.

Although sharding improves the scalability of blockchain, it raises a new challenge to the concurrency control of cross-shard transactions. 
Each cross-shard transaction is a transaction involving multiple contracts distributed in different shards. 
To commit such a cross-shard transaction, every sharding system requires a cross-shard mechanism to guarantee its atomicity and consistency for the related shards. 
Since the cross-shard transactions from different shards may access the state of the same contracts, the cross-shard mechanism needs to resolve their conflicts. 
As shown in Table \ref{tab:comparison}, the existing cross-shard mechanisms can be classified into two classes, i.e., OCC and 2PL, which are discussed as follows.

For an OCC cross-shard mechanism, cross-shard transactions are committed in an optimistic manner. In detail, each cross-shard transaction is divided into multiple sub-transactions, each of which involves the contracts stored by the same shard.  
For example, as shown in \autoref{fig:related_work_1}, Cxn.~1 in \autoref{fig:race_event} is divided into three sub-transactions. 
We emphasize that in the scenario of smart contracts, the later sub-transaction can be known only after the former sub-transaction is executed. 
Thus, the sub-transactions need to be generated and committed in series. 
The basic premise of OCC is that most transactions do not conflict with other transactions. 
However, before all sub-transactions of a cross-shard transaction are committed, if its touched contracts are updated by other transactions, it needs to be aborted and the state change of the touched contracts will be withdrawn.
% Monoxide~\cite{monoxide} and Pyramid~\cite{pyramid} adopt this type of cross-shard mechanism.

For a 2PL cross-shard mechanism, a cross-shard transaction is committed after requesting and releasing the related contracts' locks. 
A 2PL cross-shard mechanism assumes that clients or reference shards store the latest state of all contracts, called \emph{client-driven} and \emph{shard-driven}, respectively. 
Before committing a cross-shard transaction, the client (or the reference shard) pre-executes it and knows all its related contracts. 
Then, the client requests the locks from the shards storing the related contracts. 
Only after the client requests all locks, can the client be committed in the related shards and release all locks. 
The request and release for locks are realized by sub-transactions. 
For example, in \autoref{fig:related_work_2}, the client first requests the locks from Shard 1 and 2. 
It then commits the transaction and releases the locks. 
To avoid deadlocks, the 2PL cross-shard mechanisms need some deadlock prevention designs. 
For example, if a client (or reference shard) requests a lock of a contract that has been locked by other clients (or reference shards), it needs to release its acquired locks.

Although the cross-shard mechanisms guarantee that each cross-shard transaction can be committed in all its related shards, each shard independently packs and orders its transactions in each round. It results in non-deterministic race conditions and frequent aborting of cross-shard transactions since their involved contracts may be modified or locked by the others before they are committed, i.e., before all of its sub-transactions are committed. This makes the sharding systems perform even worse than the non-sharding ones.

\section{Strawman: An Ideal Cross-Shard Mechanism}
\label{sec:strawman}

We first describe an ideal cross-shard mechanism that ignores the decentralized and Byzantine environment of blockchain. It motivates the main design of \textsc{Prophet} in \autoref{sec:deterministic_sharding}. This mechanism guarantees that no transactions are aborted.
% and introduces minimum cross-shard communication.
% or rolled back during execution.

\begin{figure}[t]
    \centering
    \includegraphics[width=0.9\linewidth]{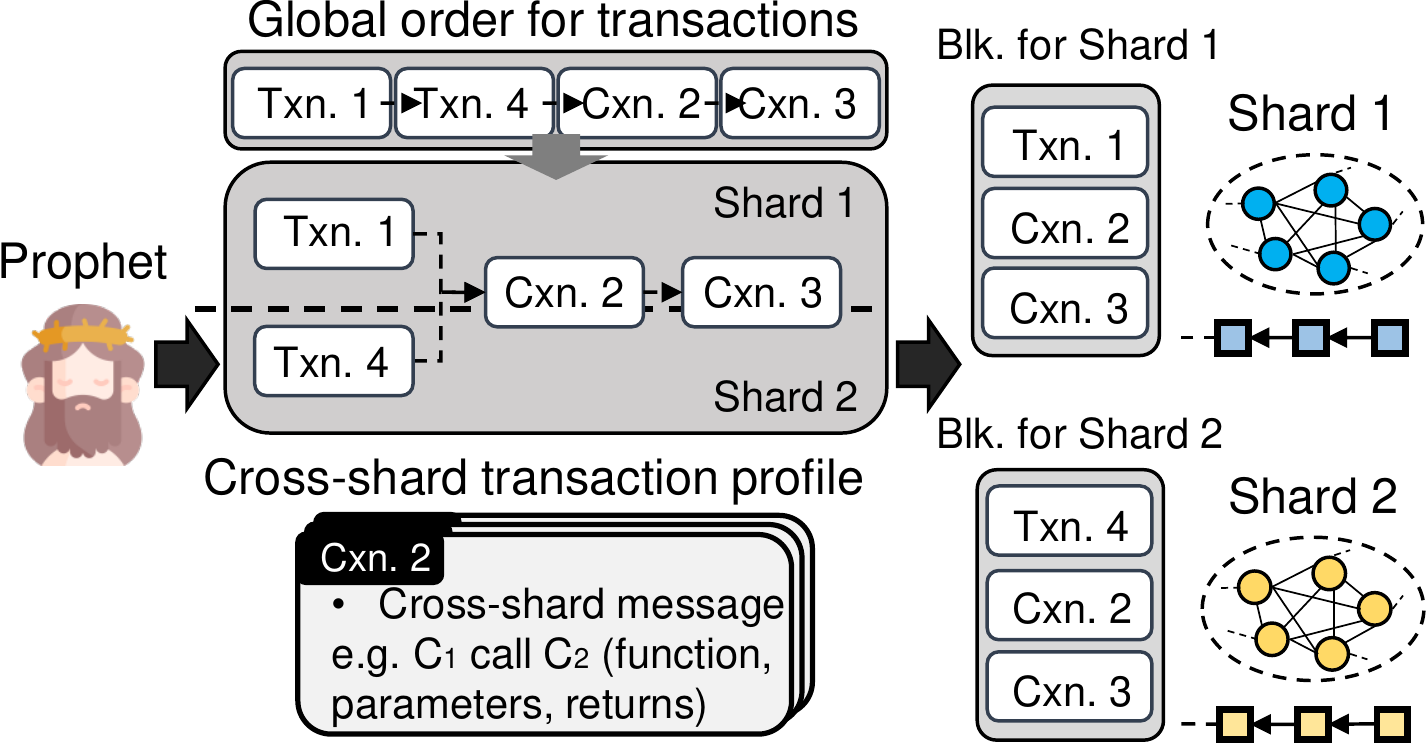}
    \caption{An ideal cross-shard mechanism for sharding.}
    \label{fig:prophet}
    \vspace{-18pt}
\end{figure}

As shown in \autoref{fig:prophet}, the mechanism introduces a new node called \emph{prophet}. Assume that the prophet is fully trusted and has infinite computing power and storage capacity. 
For each consensus round, it first validates and executes transactions sequentially for a global order of the transactions, such as $1423$ in the figure.
Next, according to the global order, the prophet generates a serial order for each shard's block. 
Specifically, for a single-shard transaction, since it only reads or writes the state of contracts in a shard, its execution only depends on the latest executed transaction in the shard. 
For a cross-shard transaction, it involves the state of multiple shards, thus its execution depends on several transactions from different shards. 
For example, in \autoref{fig:prophet}, Cxn.~2 executes following Txn.~1 and Txn.~4. 
After validation and execution, the blocks generated by the prophet correspond to a global order which shows the data dependency of all transactions in this round.

Moreover, the prophet can record some meta information required for the execution of each cross-shard transaction in each shard, called \emph{cross-shard transaction profile}.
The profile includes all cross-shard inter-contract calls and their parameters and returns, called \emph{cross-shard message}.
The profile enables each shard to execute transactions without communicating with the other shards during execution.

Finally, the prophet sends the serial order and the profile to the corresponding shard with a signature. 
The shards only need to replicate and execute transactions one after the other based on their received serial orders. 
All transactions can be committed without any conflicts according to the global order. 
% The former one is used to discover the data dependency of the transactions in a block time, which can guarantee the serializability of the transactions. 
% The second one can enable each shard to execute transactions without communicating with the other shards during execution.

Although the mechanism eliminates transaction aborts and requires minimal coordination among shards, the assumption of such a special node is too ideal. In particular, a fully trusted node violates the decentralized inherence of public blockchain and it has to locally maintain the whole state for all shards to pre-execute transactions. 
Thus, it raises a question about how to implement such a prophet in the decentralized and Byzantine environment of blockchain sharding. 
In the following, we present \textsc{Prophet}, which achieves a similar effect through the cooperation and supervision among shards in a distributed manner and without any trusted third party.

\section{Byzantine-Tolerant Deterministic Ordering for Blockchain Sharding}
\label{sec:deterministic_sharding}

\subsection{System Model \& Threat Model}
\label{sec:system_model}

\begin{figure*}[t]
    \centering
    \includegraphics[width=0.82\linewidth]{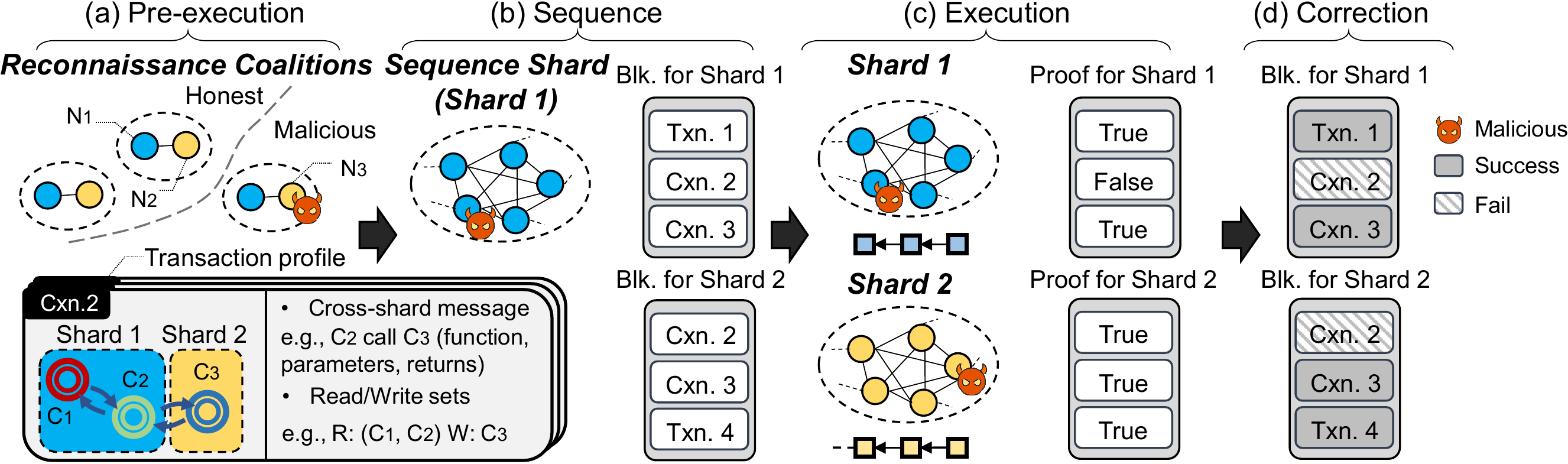}
    \caption{The architecture of \textsc{Prophet}.}
    \label{fig:prophet_overview}
    \vspace{-18pt}
\end{figure*}

Similar to the existing blockchain sharding~\cite{elastico,omniledger,rapidchain,pyramid}, \textsc{Prophet} proceeds in \emph{epochs}, each of which includes multiple rounds. 
It consists of a set of nodes following the Byzantine failure model which includes two kinds of nodes, i.e., \emph{honest} and \emph{malicious}. The honest nodes abide by all protocols. The malicious nodes are controlled by a Byzantine adversary and may collide with each other and violate the protocols in arbitrary manners, e.g., denial of service, tampering, and forgery. 
All nodes are randomly divided into multiple shards. The nodes in each shard store a proportion of contracts and verify and execute the transactions involving the stored contracts.
Besides, each lightweight client only stores its accounts and do not store any contracts. 
Since the world state of current blockchain is huge (e.g., total size in Ethereum is currently more than $130$ GB and keeps increasing~\cite{worldstate}), in \textsc{Prophet}, neither a node or a client can store the state of all contracts.

\subsection{Motivation \& Overview}
\label{Motivation}

We start with the problems that the strawman in \autoref{sec:strawman} highlights and build up our design step by step.

% The simplest solution is to delegate the prophet's function to one shard since every shard is trusted through the intra-shard consensus. 
% However, this requires every node in this shard to store all contracts; thus, only the machines with large amounts of storage and computation can participate in the delegated shard.
% This solution breaks one of the main principles of blockchain that contributes to the decentralization of its participants.

% To avoid breaking the decentralization, we analyze the function of the prophet as follows. 
According to \autoref{sec:strawman}, the function of the prophet includes two tasks, i.e., pre-execution and ordering for pending transactions. 
Specifically, the pre-execution task requires the storing of the blockchain state, while the ordering task does not require it. 
It is because the pre-execution needs to output the prerequisite information about ordering (i.e., read/write sets and cross-shard messages) by executing transactions based on the blockchain state.
In comparison, the ordering is based on the prerequisite information provided by the pre-execution and does not need to execute any transaction or store any contract; thus, it is \emph{stateless}.

% (introduction of Byzantine deterministic ordering ...)

Based on their characteristics (i.e., requirements and workload), we delegate these two tasks to different parties as follows.
First, since any node cannot store the state of all contracts (as discussed in \autoref{sec:system_model}), to pre-execute all possible transactions, we introduce a new type of parties named \emph{reconnaissance coalitions} in which nodes from different shards cooperate with each other to pre-execute pending transactions (see \autoref{sec:pre_execution}).
Nodes can freely form or dissolve reconnaissance coalitions, which are off-chain.
\textsc{Prophet} distributes a proportion of transaction fees to reconnaissance coalitions with successfully committed pre-executed transactions.
(A more detailed analysis of incentives in reconnaissance coalitions will be left as our future work.)
Second, the task of ordering is stateless and does not need intensive computation; thus, any node or shard can do it. 
For security, we let every shard take on the ordering task in turn, and we call the shard responsible \emph{sequence shard} (see \autoref{sec:sequence}). The sequence shard will be updated in each epoch for load balancing.
% As proved in ..., these two new roles split the function of prophet described in \autoref{sec:strawman}.
% Different from the traditional sharding, \textsc{Prophet} introduce a layer-2 sharding architecture, which means the nodes will take on new tasks on top of the original shards. 
% The new tasks do not introduce additional storage overhead but utilize spare computation and communication resource. 
% As shown in \autoref{fig:prophet_overview}, there are two new kinds of roles, i.e., \emph{reconnaissance coalition} and \emph{sequence shard}, responsible for different tasks. 
% To distinguish with the sequence shard, we call the original shards as \emph{shards} in the paper. 
With the help of these new parties, \textsc{Prophet} proceeds in four phases, i.e., \emph{pre-execution}, \emph{sequence}, \emph{execution}, and \emph{correction}, for each round.

\subsection{Phase 1: Pre-execution}
\label{sec:pre_execution}

During this phase, each reconnaissance coalition selects a disjoint set of pending transactions with some specific range of transaction hash and executes them one by one.
In each reconnaissance coalition, if a node meets an inter-contract call to a contract located in another shard when pre-executing a transaction, it turns to the other nodes in the same reconnaissance coalition. 
For example, in \autoref{fig:prophet_overview}, when pre-executing Cxn.~2, since Contract $C_3$ belongs to Shard 2, Node $N_1$ needs to send the function call and parameters to and get the returns from Node $N_2$. 
Thus, each reconnaissance coalition can be regarded as an individual able to execute the single-shard or cross-shard transactions for its related shards.

In particular, each transaction reads from the current state of the blockchain, executes its logic, and keeps the writes in a local write set. Since the change of each transaction is kept in local write set, the state read by each transaction is always the same. For each transaction, the reconnaissance coalition records its read/write set and its cross-shard messages during execution. To clarify our basic idea, we define the read/write set of a transaction as the addresses of smart contracts, which will be extended to a fine-grained one in \autoref{sec:fine_grained_ordering}. Moreover, the serial execution will be extended to a more efficient one in \autoref{sec:communication}. The cross-shard message about a cross-shard inter-contract call includes its function name, parameters and return.

We emphasize that there is no guarantee that all reconnaissance coalitions are honest because their formation is free and cannot guarantee that all nodes in a reconnaissance coalition are honest. 
In a Byzantine environment, the read/write sets and cross-shard messages recorded by a reconnaissance coalition with malicious nodes will be wrong. For example, in \autoref{fig:prophet_overview}, there is a malicious reconnaissance coalition since there is a malicious node $N_3$ in the coalition.
Since each successful pre-execution gains transaction fees, we assume that honest nodes tend to form and stay in reconnaissance coalitions involving the other honest nodes and leave coalitions involving malicious nodes (the detection of malicious nodes in a reconnaissance coalition will be discussed in \autoref{sec:correction}).

\subsection{Phase 2: Sequence}
\label{sec:sequence}

After a reconnaissance coalition pre-executes a transaction, it passes the transaction and the corresponding transaction profile to the sequence shard. 
The sequence phase starts when the leader of the sequence shard receives enough (i.e., more than a predefined threshold) transactions.

For each round, in the sequence phase, based on the pre-execution results passed by the reconnaissance coalitions, the leader of the sequence shard can determine which transactions will be included in the global order.
% and the global order of them. 
To avoid transaction conflicts, the sequence shard only allows the transactions involving disjoint read/write sets to be packed into the order. 
In such an order, the transactions are processed as the same as they are in the pre-execution phase since they are not in conflict. 
Based on the order, the sequence shard generates a serial order for the block of every shard and provides a transaction profile for transactions included in the block. 
For example, in \autoref{fig:prophet_overview}, the sequence shard proposes two new blocks to Shard 1 and 2, respectively.
Finally, based on the intra-shard consensus, the nodes in the sequence shard send the new blocks with a collective signature (such as CoSi~\cite{cosi} or BLS~\cite{bls} in the existing sharding~\cite{omniledger,pyramid}) to the shards.

The sequence shard has two characteristics. The first one is stateless, which means the nodes in the sequence shard can generate a global order depending on transaction profiles received from reconnaissance shards and without storing the state of all contracts and pre-executing transactions. 
The second one is trusted, which means each message published by the sequence shard is via an intra-shard consensus.

\subsection{Phase 3: Execution}

As discussed in \autoref{sec:pre_execution}, in the pre-execution phase, the pre-execution results cannot be guaranteed because the reconnaissance coalitions may be malicious. 
Moreover, in the sequence phase, because the sequence shard is stateless and only responsible for ordering the transactions instead of validating, there can be invalid transactions or conflict events existing in the order. Therefore, the shards need to execute and validate the transactions included in the received blocks and compare them with the read/write sets and transaction profile. 

During the execution, each shard runs an intra-shard consensus and executes all transactions based on the transaction profile. 
For example, in \autoref{fig:prophet_overview}, Shard 2 executes Cxn.~2 based on the function call from Contract $C_2$ to $C_3$ and the corresponding parameters in the transaction profile. 
If a shard finds that the read/write sets or cross-shard messages of a transaction are different from those provided by the reconnaissance coalitions, it can mark the transaction as invalid. 
However, if the execution results of a transaction exactly match its transaction profile, the shard can mark it as valid. 
For example, Shard 1 marks Cxn.~2 as invalid if the cross-shard message or read/write set in the transaction profile is incorrect.
The intra-shard consensus can guarantee that the result published by any shards is trusted.

\subsection{Phase 4: Correction}
\label{sec:correction}

The confirmation of a cross-shard transaction, i.e., a shard commits the transaction and updates the state of contracts based on the transaction, requires the proof generated by all the related shards of the transaction in the execution phase.
At the end of each round, 
% \textsc{Prophet} requires a phase in which 
every shard shares its validation results (i.e., proof generated in the execution phase) with the other shards. 
The proof denotes the validity of each pre-executed transaction included in the global order. 
Each cross-shard transaction can be committed in a shard only when the shard receives the validity proof from all the other shards related to the transaction. 
For example, in \autoref{fig:prophet_overview}, Cxn.~2 is related to Shard 1 and 2, thus it cannot be committed without the proof of both these two shards. 
Since the proof of Shard 1 marks it as invalid, it will not be committed. 
In addition, the honest nodes in the reconnaissance coalition responsible for Cxn.~2 can leave the coalitions and mark the nodes responsible for the invalid part (contracts in Shard 1) as malicious.

\subsection{Discussion}

Different from the traditional blockchain sharding that only has the execution phase, \textsc{Prophet} has three additional phases (The overhead will be analyzed in \autoref{sec:analysis}).
In each round, a deterministic global order for all transactions, including single-shard and cross-shard ones, can be generated and shared by all shards through these three phases. 
Following the order, the shards can orderly execute and commit transactions and update the blockchain state without conflicts.
The cooperation within reconnaissance coalitions solves the challenge of information isolation among shards, while the stateless ordering in the sequence shard and the inter-shard proof sharing in the final correction phase deal with Byzantine failures. 
A rigorous theoretical analysis is provided in \autoref{sec:analysis}. 

\section{Design Refinement}
\label{sec:refinement}

% However, before applying \textsc{Prophet} in practice, several problems still need to be solved for efficiency as follows.

\subsection{Parallelization of Sequencing and Execution}
\label{sec:parallelization}

\textbf{Problem of additional consensus.} \textsc{Prophet} introduces an additional sequence phase in each round. 
This phase requires an intra-shard consensus in the sequence shard, doubling the consensus time for each block.

\textbf{Design.} To solve the problem, \textsc{Prophet} parallelizes the sequence phase and execution phase. Specifically, the leader of the sequence shard can send a global order to the shards before the sequence shard validates the new order through consensus. 
Then, in the execution phase, the sequence shard validates the new global order and pre-execution results. 
An invalid order results in an invalid proof generated by the sequence shard. 
Thus, an invalid order proposed by a malicious leader of the sequence shard can be detected in the correction phase.

\subsection{Fine-grained Ordering}
\label{sec:fine_grained_ordering}

% \noindent
\textbf{Problem of coarse-grained ordering.} In the above system, the reconnaissance coalitions simply define the read/write sets of transactions as the addresses of smart contracts in the pre-execution phase. 
Then, in the sequence phase, the sequence shard only pack the transactions that are not related to the same contracts. 
In such a coarse-grained manner, the transactions accessing the same contract are considered conflict and thus cannot be packed in a global order. 
If there are a majority of conflict transactions in the demand, the throughput of the sequence phase may become a bottleneck of \textsc{Prophet}. 
% (\autoref{sec:fine_grained_ordering})

\textbf{Observation.} To illustrate the performance of the coarse-grained ordering in \autoref{sec:deterministic_sharding} in practice, we collect the history of transactions from Nov-25-2019, Feb-17-2020, and Apr-19-2020 in Ethereum. Then, we execute the transactions in the batch of $10, 100, 500, 1000, 3000$ in parallel to simulate the pre-execution phase in \textsc{Prophet} and evaluate their conflict ratio. 
We denote the approach by contract level. 
As shown in \autoref{fig:conflict_in_batch}, the result shows that the conflict ratio increases with the batch size and nearly 90\% transactions are in conflict when the reconnaissance coalitions pre-execute 3000 transactions.

\begin{figure}[t]
    \centering
	\subfloat[][]{
        \begin{minipage}[t]{0.325\linewidth}
            \centering
            \includegraphics[width=\linewidth]{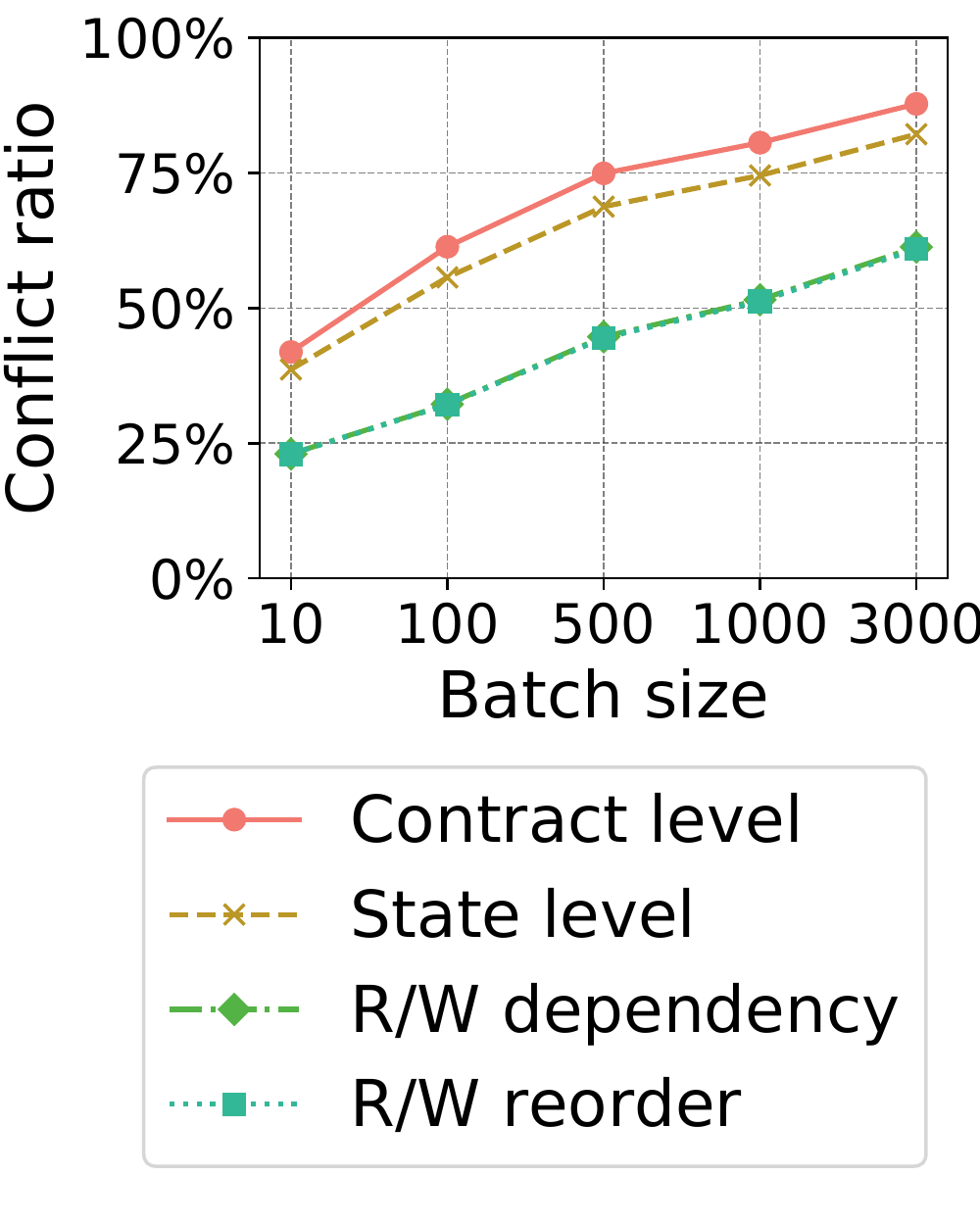}
            \label{fig:conflict_in_batch}
        \vspace{-18pt}
        \end{minipage}
    }
	\subfloat[][]{
        \begin{minipage}[t]{0.615\linewidth}
            \centering
            \includegraphics[width=\linewidth]{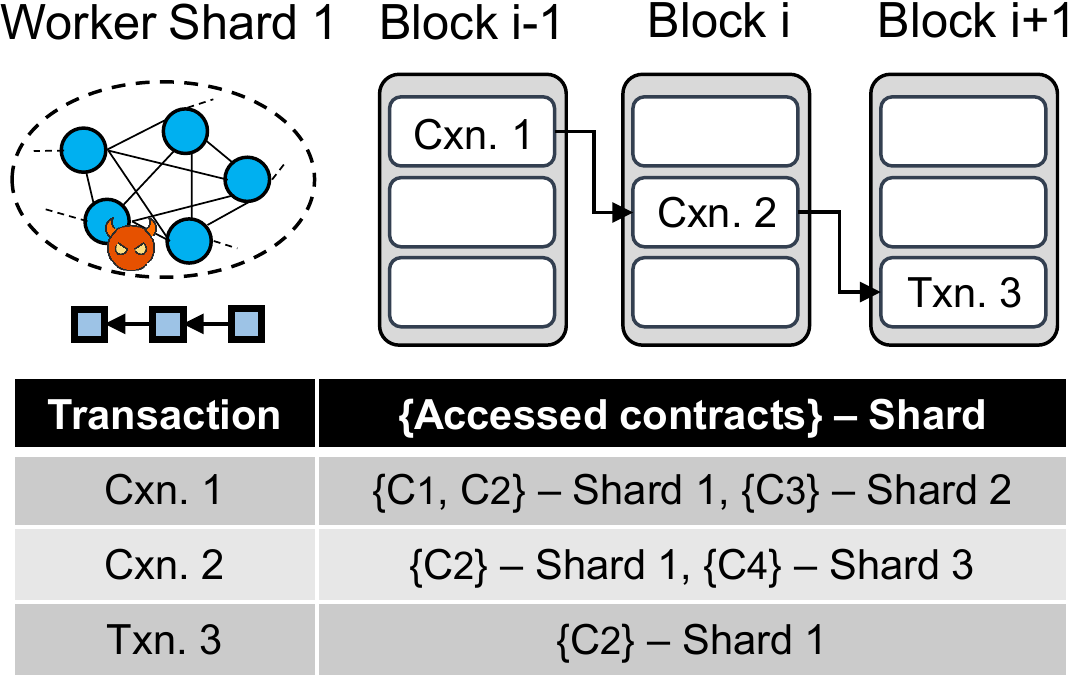}
            \label{fig:asynchronous}
        \vspace{-18pt}
        \end{minipage}
    }
    \caption{(a) Conflict ratio of transactions in a batch with varying batch size in the pre-execution phase; (b) Transaction confirmation rule in asynchronous correction.}
\vspace{-16pt}
\end{figure}

% \begin{figure}[t]
%     \centering
%     \includegraphics[width=0.8\linewidth]{conflict_in_batch.pdf}
%     \caption{Conflict ratio of transactions in a batch with varying batch size in the pre-execution phase. 
%     % (Contract level denotes that two transactions accessing the same contract are considered conflict. State level denotes that two transactions accessing the same state are considered conflict.)
%     }
%     \label{fig:conflict_in_batch}
%     \vspace{-18pt}
% \end{figure}

\textbf{Design.} Therefore, we propose a fine-grained read/write ordering approach that is composed of two following steps. 

1) First, we design a fine-grained read/write set identification. We first redefine the read/write sets of a transaction as the blockchain storage that it reads or writes. 
Then, the transactions related to the same contract can access different positions of its storage. 
For example, if two transactions access the same contract but read or write different variables of the contract, they are not in conflict. 
For such a fine-grained identification, we take a deep dive into smart contracts~\cite{ethereum}. 
All contract fields and mappings are saved in blockchain storage and each transaction is a sequence of instructions among which \texttt{SSLOAD} and \texttt{SSTORE} are the two instructions for blockchain persistent storage read and write, respectively.
Thus, during the pre-execution, the reconnaissance coalitions record the instructions \texttt{SSLOAD} and \texttt{SSTORE} and their corresponding addresses. 
As shown in \autoref{fig:conflict_in_batch}, the approach (denoted by state level) reduces the conflict ratio by about 5.34\% compared with the contract-level approach.

2) Second, we design an ordering rule considering the read/write dependency for the sequence phase. If the sequence shard decides an order in which the execution result of any transaction will not influence the read/write sets of its following transactions, the execution of transactions
% and their transaction profiles 
in the execution phase will be the same as those in the pre-execution phase thus there are no conflicts in the order.
% We will prove this property in \autoref{sec:analysis}.
To achieve it, we allow a transaction to have a read-after-read or write-after-read dependency with its previous transactions in the order.
As shown in \autoref{fig:conflict_in_batch}, the approach (denoted by R/W dependency) reduces the conflict ratio by about 21.42\% compared with the state level approach. 
Besides, we also evaluate a reorder rule~\cite{aria}. In detail, for two transactions with read-after-write dependency, the rule can change their position if the new order does not violate the before ordering rule. However, it only reduces the conflict ratio by 0.21\%.
% compared with the R/W dependency approach.

% \subsection{Shortcut Memorization for Early Abort}
% \label{sec:parallel_execution}

% Memorization is a widely applicable technology that caches the results of some computation operations and reuses the cached results when meeting the same operations. The idea is motivated by the fact that an early-abort transaction may have different execution result in the next blockchain state, a subset of its computation operations remains unchanged. ...

% \begin{figure}[t]
%     \centering
%     \includegraphics[width=0.72\linewidth]{asynchronous.pdf}
%     \caption{Transaction confirmation in asynchronous correction.}
%     \label{fig:asynchronous}
%     \vspace{-18pt}
% \end{figure}

\subsection{Asynchronous Correction}
\label{sec:aynchronous_correction}

% \noindent
\textbf{Problem of synchronous correction.} In the correction phase, for a shard, the validity of its related transactions can be proved and the state of its stored contracts can be updated only when the shard receives all proofs from the other shards. However, in the practice, for each round, there is great uncertainty about both the consensus latency in each shard~\cite{gpbft,bobtail} and the latency of cross-shard communication. This can result in an barrel effect, which means the round time of each shard in \textsc{Prophet} will depend on the slowest shard. 
% (\autoref{sec:aynchronous_correction})

\textbf{Observation.} The consensus latency has high variance for Proof-of-Work (PoW) protocols adopted by Bitcoin and Ethereum or Byzantine fault tolerance (BFT) protocols adopted by Hyperledger Fabric. For example, although Bitcoin theoretically produces one block every 10 minutes, for 5\% of the time, Bitcoin’s inter-block time is at least 30 minutes~\cite{bobtail}. For PBFT, the consensus latency is uncertain because the state of network environment is often dynamic and elusive~\cite{gpbft}. 
% Therefore, the throughput will be bottlenecked by the slowest shard in the system with the synchronous correction in \autoref{sec:deterministic_sharding}.

\textbf{Design.} To overcome the problem, \textsc{Prophet} adopts an asynchronous correction design. 
Specifically, in the execution phase, before receiving the proof from the other shards, a shard can optimistically assume that all the transactions are valid. 
Next, it can update the state of its stored contracts based on the current block and move to the next round. 
When a shard receives an invalid proof for a previous transaction from another shard, this previous transaction will be invalidated. 
All the following transactions related to the contracts involved by the invalid transaction will be also invalidated. 
In other words, \textsc{Prophet} has the following confirmation rule for transactions.
\begin{newrule}
A transaction $\mathcal{T}$ can be confirmed by a shard only when the shard receives all the related proofs of $\mathcal{T}$ and all the related proofs of the previous transactions related to $\mathcal{T}$. 
\end{newrule}

For example, \autoref{fig:asynchronous} shows three blocks, i.e., Block i-1, i, and i+1, of Shard 1. In these three blocks, there are three transactions, i.e., Cxn.~1, Cxn.~2 and Txn.~3, all of which access the same contract $C_2$ stored in Shard 1. 
Based on the confirmation rule of transactions, the confirmation of Txn.~3 depends on the proof of Shard 2 in Block i-1, the proof of Shard 3 in Block i, and the proof of Shard 1 in Block i+1.
% As evaluated in \autoref{sec:evaluation}, when the proportion of malicious nodes is small in the system or there is a reputation-based incentive mechanism (see \autoref{sec:incentive}), the asynchronous correction has a significant performance improvement compared with the synchronous one.

\subsection{Parallel Pre-Execution}
\label{sec:communication}

% \noindent
\textbf{Problem of serial pre-execution.} The throughput of \textsc{Prophet} also depends on the total pre-executed throughput of reconnaissance coalitions. The above system considers a serial pre-execution approach in which the nodes in each reconnaissance coalition execute transactions one by one. However, when the communication accounts for a higher portion than the contract execution as proved below, each node may spend a lot of time on the communication of cross-shard inter-contract calls, keeping its CPU idle most of the time and restricting the pre-executed transaction throughput. 
% (\autoref{sec:communication})

\textbf{Observation}. To find the main bottleneck of pre-executed throughput, we evaluate the simplest cooperation mode for a reconnaissance coalition as shown in \autoref{fig:parallel_cooperation_1}. Specifically, the nodes in the reconnaissance coalition executes transaction one by one.
% For example, Node $N_1$ first executes Cxn.~1 and then transfers the execution task of Contract $C_2$ to Node $N_2$. A
Based on the transactions collected in \autoref{sec:fine_grained_ordering}, the communication time accounts for 87.5\% of the total time, since most contracts' computation is simple. 
% It motivates us to design a communication-efficient cooperation mode for reconnaissance coalitions.

\textbf{Design}. To minimize the communication overhead, we propose an overlap cooperation mode that overlaps the computation process and communication process during pre-execution. 
For example, as shown in \autoref{fig:parallel_cooperation_2}, after meeting the first cross-shard contract call in Cxn.~1, Node $N_1$ can transmit a cross-shard message to Node $N_2$ while simultaneously executing the computation task of Cxn.~2. 
Through this way, the computing resource and communication resource could achieve nearly full utilization. 
Furthermore, since all transactions are pre-executed based on the state of the previous block, we also propose a parallel cooperation mode.
In particular, each node executes different transactions at the same time using redundant computation resources. For example, as shown in \autoref{fig:parallel_cooperation_3}, Node $N_1$ executes Cxn.~1 and 2 using two threads, i.e., Thread 1 and 2, respectively.
As evaluated in \autoref{sec:micro_benchmark}, the parallel scheme can increase the pre-execution throughput by $86.9\%\sim280.0\%$ compared with the sequential scheme.

\section{Analysis}
\label{sec:analysis}

We first show how \textsc{Prophet} achieves both determinism and serializability.
The former one means that the same result is always produced in all honest node for each shard. 
The later one requires transactions in the system to produce the results following some serial order. 
The analysis depends on the intra-shard consensus of shards in \textsc{Prophet} thus we define $v$ as the fault threshold of the adopted intra-shard consensus~\cite{multi_threshold}. 
For example, the synchronous protocol in Rapidchain~\cite{rapidchain} tolerate up to $v=1/2$ Byzantine faults, while the asynchronous or partially synchronous protocol in Omniledger~\cite{omniledger} tolerates only up to $v=1/3$ Byzantine faults.

\begin{figure}[t]
	\centering
	\subfloat[][Sequential]{
		\begin{minipage}[t]{0.51\linewidth}
			\centering
			\includegraphics[width=\linewidth]{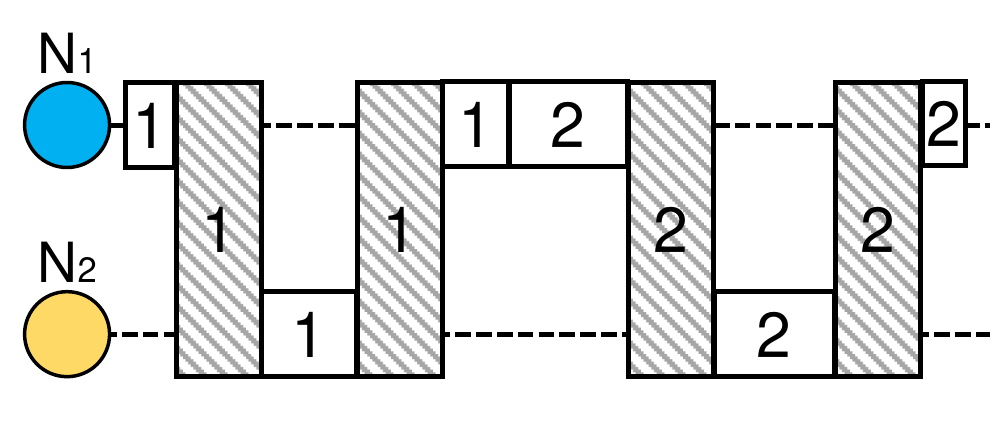}
		\end{minipage}
		\label{fig:parallel_cooperation_1}
	}
% 	\hfill
	\subfloat[][Overlap]{
		\begin{minipage}[t]{0.32\linewidth}
			\centering
			\includegraphics[width=\linewidth]{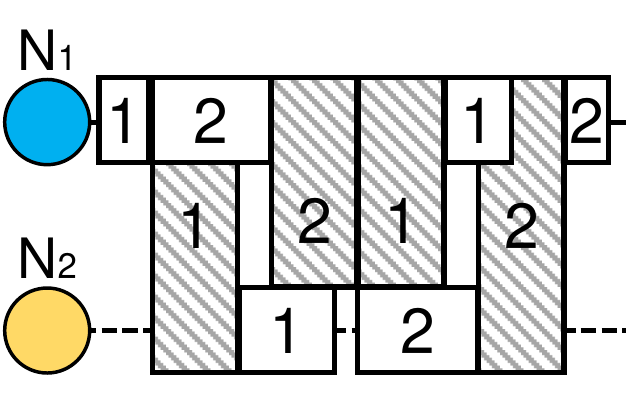}
		\end{minipage}%
		\label{fig:parallel_cooperation_2}
	}
	\\\vspace{-5pt}
	\subfloat[][Parallel]{
		\begin{minipage}[t]{0.41\linewidth}
			\centering
			\includegraphics[width=\linewidth]{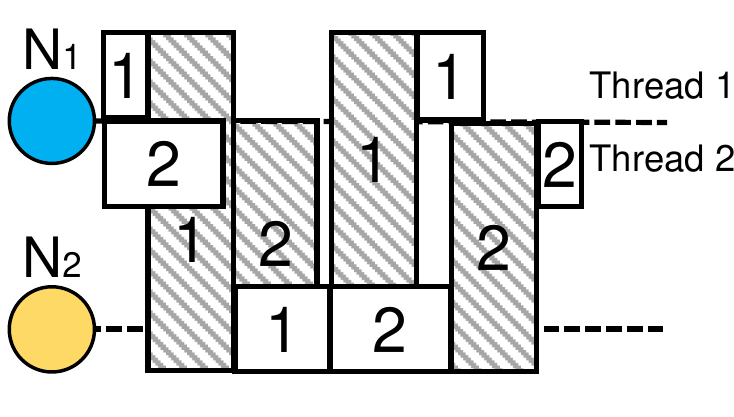}
		\end{minipage}
		\label{fig:parallel_cooperation_3}
	}
% 	\hfill
	\subfloat[][Legend]{
		\begin{minipage}[t]{0.4\linewidth}
			\centering
			\includegraphics[width=\linewidth]{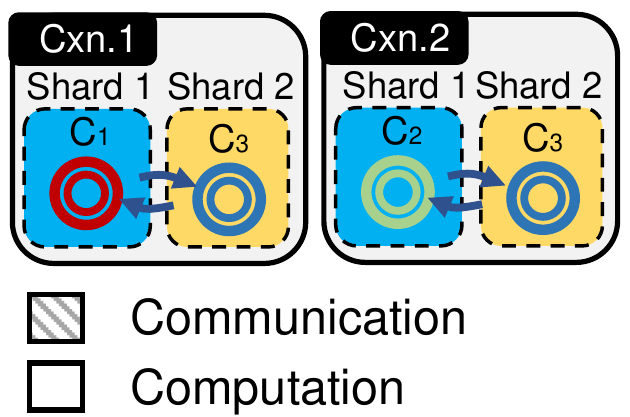}
		\end{minipage}%
		\label{fig:parallel_cooperation_4}
	}
	\caption{Comparison of three cooperation modes for a reconnaissance coalition in the pre-execution phase. The number inside each rectangle denotes the transaction ID to which the computation or communication time belongs.}
	\vspace{-16pt}
\end{figure}

\begin{theorem}
\textsc{Prophet} achieves determinism and serializability if there are no more than $v$ fraction of malicious nodes in each shard.
\label{thm:thm1}
\end{theorem}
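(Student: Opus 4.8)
The plan is to split the statement into its two halves and handle each by leaning on the single primitive the paper grants us: every shard (including the sequence shard) runs an intra-shard consensus that is safe whenever at most a $v$ fraction of its nodes are malicious. Under that hypothesis, every object a shard \emph{publishes} --- the global order and per-shard blocks from the sequence shard (\autoref{sec:sequence}), the per-shard validity proofs from the execution phase, and the committed state from the correction phase (\autoref{sec:correction}) --- is agreed upon by all honest nodes of that shard and is therefore a well-defined, single-valued function of the prior on-chain state.

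For \textbf{determinism} I would argue by induction on the round index $r$. The base case is the common genesis state. For the inductive step, assume all honest nodes of every shard hold the same committed state after round $r-1$. In round $r$ the sequence shard's intra-shard consensus fixes one global order $\sigma_r$ and one block per shard; each shard's intra-shard consensus then fixes one validity proof, since re-execution from a common input state against a fixed transaction profile is a deterministic computation. Finally, the confirmation rule of \autoref{sec:aynchronous_correction} is a deterministic predicate over the (consensus-certified, hence identical-for-all-honest-nodes) proofs of the related and preceding transactions. Hence every honest node in a shard confirms exactly the same transaction set and applies the same writes, closing the induction.

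For \textbf{serializability} I would exhibit $\sigma$, the concatenation of the per-round orders $\sigma_r$, as the witnessing serial order and show that the committed execution is conflict-equivalent to running the committed transactions serially in $\sigma$. The argument has three links. (i) A transaction is marked valid in the execution phase only if its re-executed read/write sets and cross-shard messages match the profile supplied during pre-execution; so for every \emph{committed} transaction the untrusted pre-execution data coincide with ground truth, which neutralizes malicious coalitions. (ii) The sequence shard packs into $\sigma_r$ only transactions whose profiled accesses obey the fine-grained ordering rule of \autoref{sec:fine_grained_ordering} (disjoint writes, at most read-after-read or write-after-read dependencies), so no committed transaction's write alters the read set of a later committed transaction in $\sigma$; combined with (i) this forces each committed transaction to read, during execution, exactly the state it reads in the serial schedule $\sigma$. (iii) The cascading invalidation of \autoref{sec:aynchronous_correction} removes from the committed set every transaction that transitively reads a contract written by an invalidated transaction, ensuring no surviving transaction observes state produced outside the committed serial prefix. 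Together these give view-equivalence to the serial execution of $\sigma$ restricted to the committed transactions.

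I expect the main obstacle to be the interaction of links (ii) and (iii): making precise that the fine-grained dependency rule, stated over \emph{pre-execution} profiles, still guarantees non-interference once one accounts for transactions that are ordered in $\sigma$ but later dropped by asynchronous, retroactive invalidation. I must show that the surviving (committed) set is itself closed under the read/write dependency relation induced by $\sigma$ --- i.e. that a committed transaction never depends on a later-invalidated one --- and that this closure is exactly what the cascading rule computes. I would formalize this with a dependency graph whose edges are read-after-write relations read off the verified ground-truth read/write sets, prove the graph is acyclic because every edge respects the total order $\sigma$, and then show the cascading rule deletes precisely the forward-reachable set of each invalidated node, leaving a prefix-closed, conflict-serializable residual schedule.
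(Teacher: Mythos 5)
Your proposal is correct and follows essentially the same route as the paper's proof: determinism from the safety of intra-shard consensus (so the sequence shard's order, the execution-phase proofs, and the correction-phase commits are single-valued for all honest nodes), and serializability from the fact that the sequence shard admits only disjoint accesses or read-after-read/write-after-read dependencies, with post-verification neutralizing malicious pre-execution. The ``main obstacle'' you flag is dispatched in the paper by exactly the observation already contained in your link (ii): since read-after-write dependencies are excluded from the order, an invalidated transaction cannot have written state that any later transaction reads, so retroactive invalidation never propagates through the committed serial order --- the paper states this in one sentence rather than via your dependency-graph construction, and phrases the serializability step as a two-case contradiction rather than a direct conflict-equivalence argument, but the substance is identical.
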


\begin{proof}
When there are no more than $v < \frac{1}{3}$ malicious nodes in each shard, the intra-shard consensus can guarantee safety~\cite{rapidchain,sigmod_sharding,pyramid}, i.e., the honest nodes in each shard agree on the same valid block in each round.
% , different honest nodes can execute the same valid transactions according to the block.  
Thus, the intra-shard consensus can guarantee that both the order proposed by the sequence shard follows the rule in \autoref{sec:sequence} and \autoref{sec:fine_grained_ordering} and the validation proofs proposed by the shards are valid. 
It also guarantees that a message along with a collective signature is honest because malicious nodes are the minority, i.e., no more than $v$, of the shard. 
Moreover, the message of each party (e.g., transaction profile, order, and proof) cannot be modified and forged since the collective signature can be used to detect forgery or tampering. 
Finally, because the correction phase guarantees that any invalid transaction will not be committed in all its related shards, all the honest nodes in the sharding system can run an identical batch of transactions based on the same global serial order and the same blockchain state. 
Additionally, the code of smart contracts is deterministic~\cite{ethereum}, which means each node can get the same result given the same input for a contract method.
It guarantees the determinism of the consensus in \textsc{Prophet}.

Next, we prove the serializability by contradiction as follows. 
Assume the global order produced by the sequence shard is: $\cdots \rightarrow T_i \rightarrow \cdots \rightarrow T_j \rightarrow \cdots$ where $T_i$ and $T_j$ can be two transactions in the same shard or in the different shards. There are two possible outcomes to violate the serializability. The first one is that $T_j$'s update is overwritten by $T_i$'s update. The second one is that $T_i$ reads $T_j$'s update. However, for \autoref{sec:sequence}, $T_i$ and $T_j$ will not access the same contract. And, for \autoref{sec:fine_grained_ordering}, the sequence shard only allows read-after-read and write-after-read dependency, thus both outcomes result in a contradiction and the consensus in \textsc{Prophet} achieves serializability.
In addition, we emphasize that even if $T_i$ is invalidated in the correction phase, the following transactions in the global order will not be influenced. 
It is because $T_i$ is not allowed to change the state of contracts that are read or written by its following transactions considering the read-after-write is not allowed.
% Assume a malicious node ...
\end{proof}

Similar to the other sharding systems~\cite{elastico, rapidchain, omniledger, pyramid}, \textsc{Prophet} has a global fault threshold for the whole sharding system denoted by $f$ and a security parameter denoted by $\lambda$. After dividing each node to a random shard, the proportion of malicious nodes in each shard for \textsc{Prophet} can be proven to be lower than the fault threshold $v$ with low probability, i.e., the probability is no more than $2^{-\lambda}$, thus \autoref{thm:thm1} can be guaranteed with high probability in \textsc{Prophet}.

\textbf{Overhead Analysis.} In terms of the time overhead, \textsc{Prophet} parallelizes the sequence phase and execution phase in \autoref{sec:parallelization}; thus, there is one consensus in every shard for each round, similar to the existing blockchain sharding. 
Besides, asynchronous correction in \autoref{sec:aynchronous_correction} enables each shard to move to the next round without waiting for the other shards' proof after the execution phase, saving the time of the correction phase. 
Therefore, only the pre-execution phase introduces an additional time overhead for each round. 
In terms of the computation overhead, the pre-execution phase introduces some additional computation tasks to reconnaissance coalitions.

\section{Implementation}

We implement a prototype of \textsc{Prophet} based on Geth~\cite{geth}, the Go language implementation of Ethereum. The smart contracts in \textsc{Prophet} run in EVM in Geth. We adopt a BFT consensus with BLS multi-signature~\cite{harmony_consensus} as the intra-shard consensus of \textsc{Prophet}. For comparison, we also implement two non-deterministic sharding prototypes. 
Since the intra-shard consensus in \textsc{Prophet} can be substituted by any other BFT consensus, to ensure the result will not be affected by the difference in intra-shard consensus, we adopt the same consensus for the intra-shard consensus in these two non-deterministic sharding prototypes. 
Moreover, for a fair comparison, both prototypes are equipped with the fine-grained read/write approach in \autoref{sec:fine_grained_ordering}.
The difference between two prototypes is the cross-shard transaction processing. The first one uses the OCC mechanism in Monoxide~\cite{monoxide} and the second one uses the 2PL mechanism in Chainspace~\cite{chainspace}. Their main ideas are referred to in \autoref{sec:background}.

\section{Evaluation}
\label{sec:evaluation}

\noindent
\textbf{Dataset.} To evaluate our sharding system \textsc{Prophet} on the historical transactions in Ethereum, we implement a smart contract recorder/replayer based on EVM stateless state transition tool~\cite{evm} similar to~\cite{273865} and collect the blocks from Nov-25-2019 to May-04-2020 (block height: 9,000,000-10,000,000) from Ethereum mainnet blockchain. 

\noindent
\textbf{Setup.} The number of nodes in each shard is set as $50$. In OCC and 2PL, the maximum retry count for the transactions is set as $10$, which means that a transaction with more than 10 retries will be aborted. 
The testbed is composed of 16 machines, each of which has an Intel E5-2680V4 CPU and 64 GB of RAM, and a 10 Gbps network link. Similar to~\cite{rapidchain, omniledger}, to simulate geographically-distributed nodes, we set the bandwidth of all connections between nodes to $20$ Mbps and impose a latency of $100$ ms on the links in our testbed. 
The proportion of malicious nodes in the system is set as $12.5\%$.
In our setting, the malicious nodes in each reconnaissance coalition provide invalid cross-shard messages and read/write sets to interrupt the pre-execution phase.
We repeat each experiment three times and compute the average as its result.

\noindent
\textbf{Metrics.} We measure the performance of a sharding system using the following metrics. 1) \textit{Transaction throughput}: the throughput of the confirmed transactions measured in TPS. 2) \textit{Confirmation latency}: the delay between the time that a transaction is issued by a client until it can be confirmed by any (honest) node in the system. 3) \textit{Abort ratio}: the ratio of aborted transactions during commitment, i.e., the transaction whose retry count exceeds the maximum retry count as discussed above. 4) \textit{Invalid ratio}: the ratio of invalid transactions found in the correction phase for \textsc{Prophet}.

\subsection{Performance}

\begin{figure}[t]
    \centering
    \includegraphics[width=0.9\linewidth]{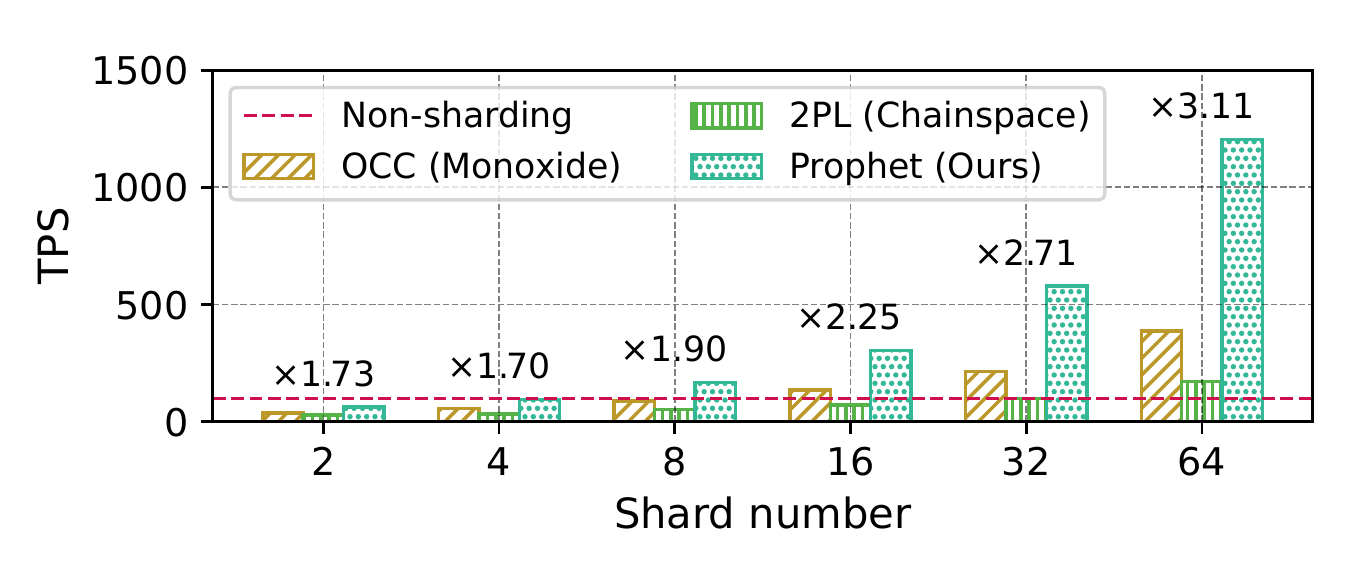}
    \caption{Transaction throughput of \textsc{Prophet} and the existing sharding systems (The number above each bar denotes the ratio of the throughput of \textsc{Prophet} over that of OCC.)}
    \label{fig:tps}
\vspace{-13pt}
\end{figure}

\begin{figure}[t]
    \centering
    \includegraphics[width=0.9\linewidth]{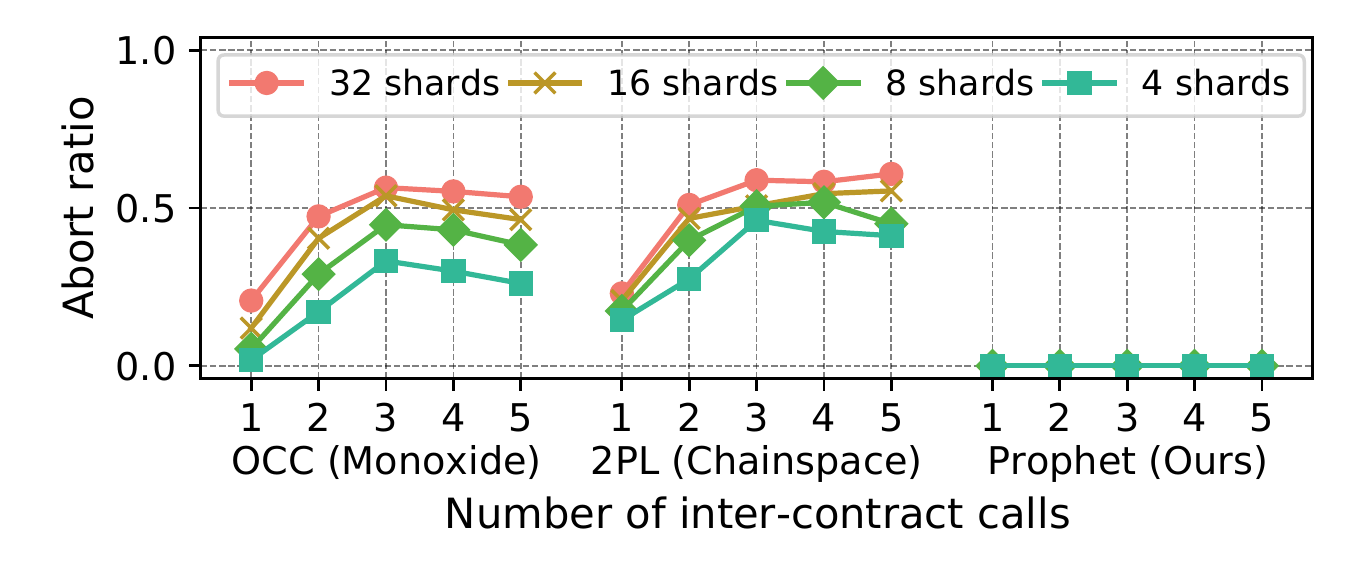}
    \caption{Abort ratio of transactions during commitment in \textsc{Prophet} and the existing sharding works.}
    \label{fig:conflict}
\vspace{-16pt}
\end{figure}

\begin{figure*}[t]
    \centering
    \begin{minipage}[t]{0.32\textwidth}
        \centering
        \includegraphics[width=1.05\linewidth]{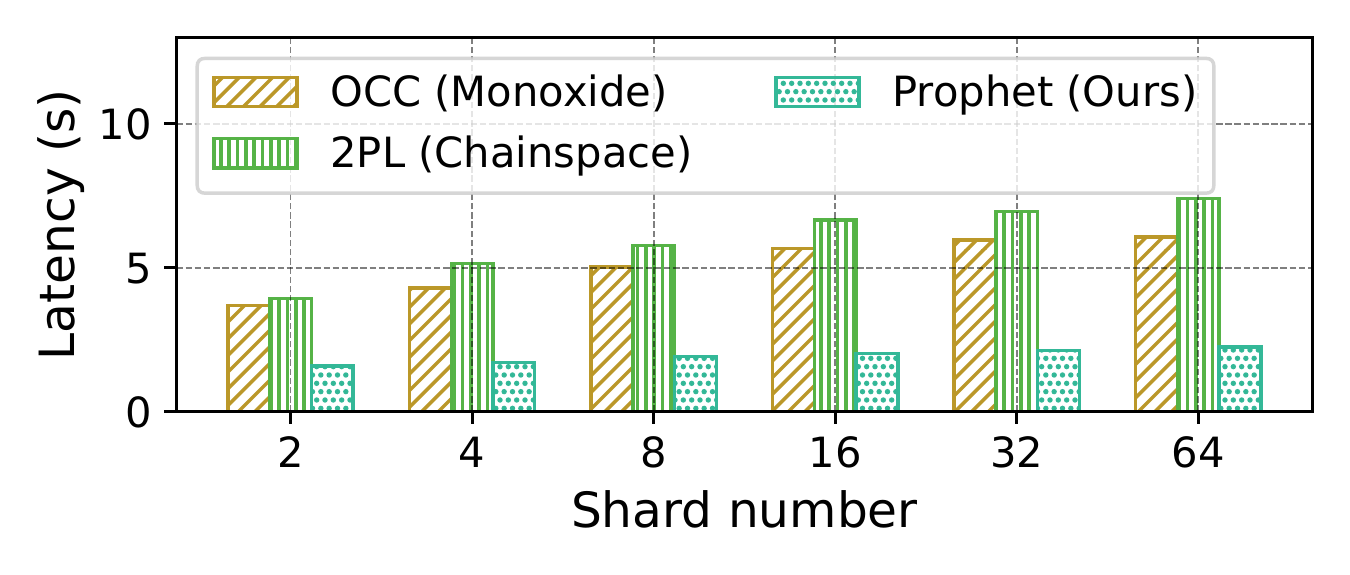}
        \caption{Confirmation latency of \textsc{Prophet} and the other blockchain sharding systems.}
        \label{fig:latency}
    \end{minipage}
    \hfill
    \begin{minipage}[t]{0.32\textwidth}
        \centering
        \includegraphics[width=1.05\linewidth]{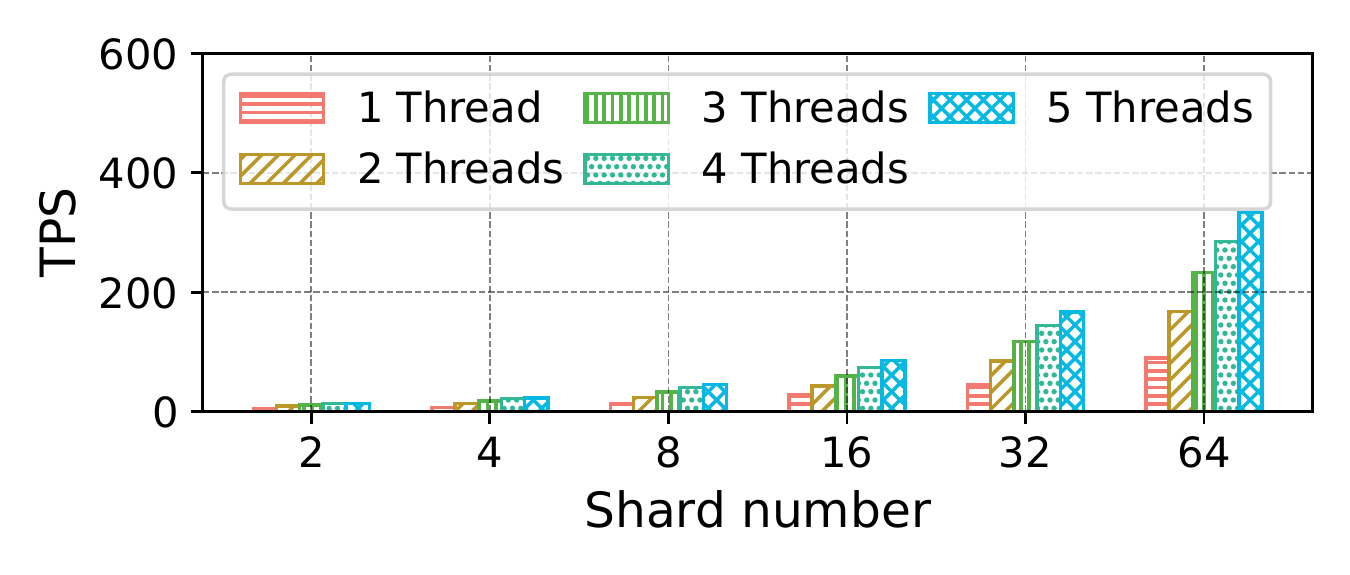}
        \caption{Pre-execution throughput of a reconnaissance coalition for \textsc{Prophet} with different number of shards.}
        \label{fig:pre_execution}
    \end{minipage}
    \hfill
    \begin{minipage}[t]{0.32\textwidth}
        \centering
        \includegraphics[width=1.05\linewidth]{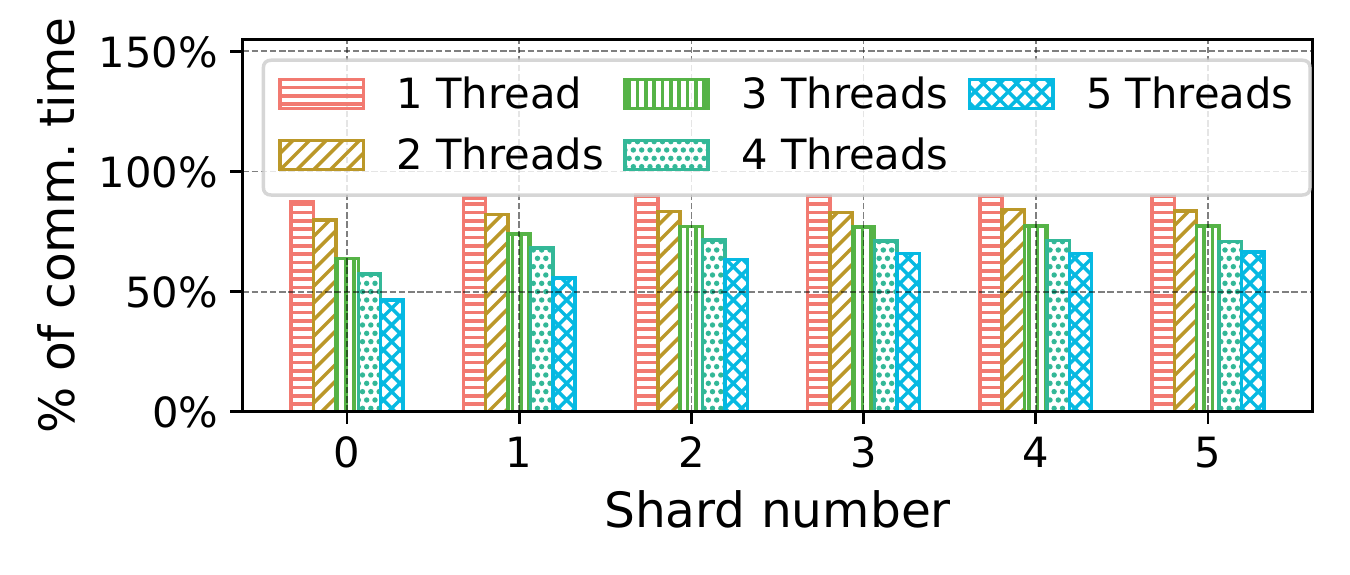}
        \caption{Percentage of communication time in the pre-execution in \textsc{Prophet}.}
        \label{fig:pre_execution_communication}
    \end{minipage}
\vspace{-15pt}
\end{figure*}

To evaluate the performance, we measure the throughput in TPS for the two non-deterministic sharding blockchains and \textsc{Prophet} with varying number of shards. As shown in \autoref{fig:tps}, all three sharding systems achieve the linear scalability. However, \textsc{Prophet} improves the throughput $1.73\sim3.11$X against the two traditional sharding systems and the improvement is more significant when there are more shards. 
Moreover, the throughput of the sharding systems is even worse than that of non-sharding system when there are less shards, the reason of which is twofold. 
First, introducing sharding into the blockchain results in cross-shard transactions, each of which needs to be divided into multiple sub-transactions and processed in multiple consensus rounds. 
Second, as discussed in \autoref{sec:background}, the contention of cross-shard transactions results in frequent aborts of transactions thus most of the throughput in the two traditional sharding systems is wasted.

To investigate the wasted throughput reason as described above, we measure the abort ratio of transactions with varying number of inter-contract calls for the two non-deterministic sharding blockchains and \textsc{Prophet}.
% For a transaction, the number of cross-shard steps equals the number of cross-shard contract calls.
Note that although there are many transactions with more than 4 inter-contract calls in the system, \autoref{fig:conflict} only shows the transactions with 1-5 inter-contract calls due to the space constraint.
\autoref{fig:conflict} shows that the abort ratio in both of all sharding systems increases as the number of shards increases. 
Specifically, the OCC-based system aborts about 50\% transactions with more than 2 inter-contract calls. 
It limits the dApps consisting of complex smart contract interactions. 
% More seriously, conf
In comparison, \textsc{Prophet} keeps the abort ratio being 0 no matter how many cross-shard contract calls the transactions include.
% since they access the cold position in blockchain storage.
% than ... no matter how many cross-shard contract calls the transactions include. 

% \begin{figure}[t]
%     \centering
%     \includegraphics[width=0.9\linewidth]{latency.pdf}
%     \caption{Confirmation latency of \textsc{Prophet} and the existing sharding systems.}
%     \label{fig:latency}
% \vspace{-18pt}
% \end{figure}

We then evaluate the confirmation latency of transactions in \textsc{Prophet} and the non-deterministic sharding blockchains with varying number of shards. \autoref{fig:latency} shows that the latency increases as the number of shards increases in the non-deterministic systems. 
This is because the increase of shards can introduce more cross-shard transactions that need more consensus round to be committed, thus the confirmation latency is higher. In comparison, the latency in \textsc{Prophet} is low and relatively stable since any cross-shard transaction can be committed by \textsc{Prophet} in one round.

% \begin{figure}[t]
%     \centering
% 	\subfloat[][Without incentive]{
% 		\begin{minipage}[t]{0.47\linewidth}
% 			\centering
% 			\includegraphics[width=\linewidth]{}
% 		\end{minipage}
% 		\label{fig:malicious_without_incentive}
% 	}\hfill
% 	\subfloat[][With incentive]{
% 		\begin{minipage}[t]{0.47\linewidth}
% 			\centering
% 			\includegraphics[width=\linewidth]{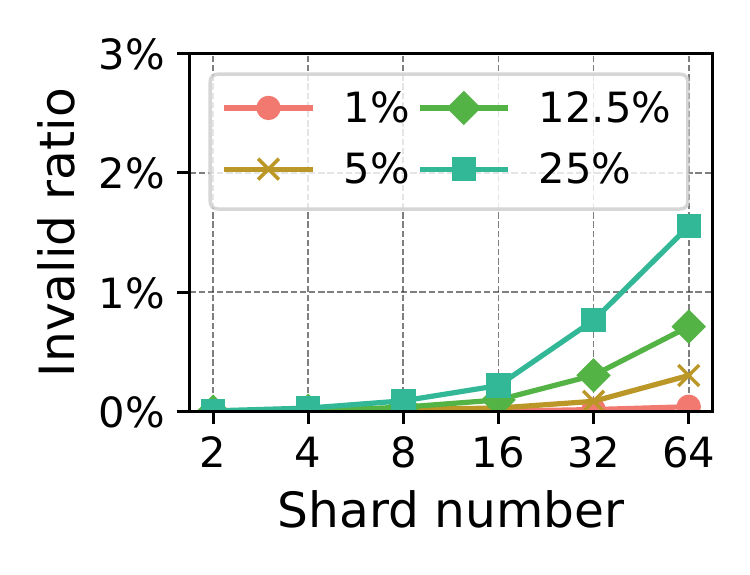}
% 		\end{minipage}
% 		\label{fig:malicious_with_incentive}
% 	}\\
% 	\subfloat[][Fluctuation of invalid ratio in \textsc{Prophet} with 64 shards and with incentive.]{
% 		\begin{minipage}[t]{\linewidth}
% 			\centering
% 			\includegraphics[width=\linewidth]{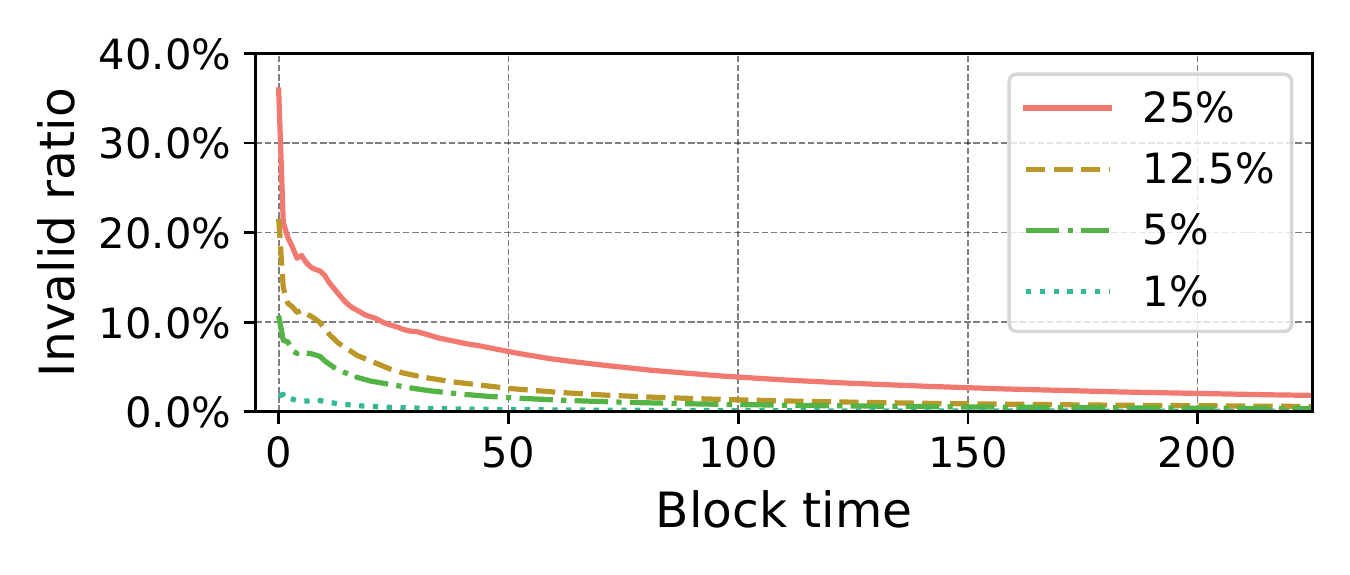}
% 		\end{minipage}
% 		\label{fig:invalid_fluctuation}
% 	}
% 	\caption{Ratio of invalid transactions in the correction phase in \textsc{Prophet} without and with incentive.}
% 	\label{fig:malicious}
% % \vspace{-18pt}
% \end{figure}

\subsection{Micro-benchmark}
\label{sec:micro_benchmark}

% \begin{figure}[t]
%     \centering
%     \includegraphics[width=0.9\linewidth]{pre_execution.pdf}
%     \caption{Pre-execution throughput of a reconnaissance coalition for a system with different number of shards.}
%     \label{fig:pre_execution}
% \vspace{-18pt}
% \end{figure}

% \paragraph{Cooperative reconnaissance} 
To analyze the effectiveness of our communication efficient pre-execution proposed in \autoref{sec:communication}, we evaluate the pre-execution throughput of a reconnaissance coalition with varying number of shards. As shown in \autoref{fig:pre_execution}, the pre-execution throughput increases with the number of threads in each node. When there are more shards in \textsc{Prophet}, the number of nodes in a reconnaissance coalition increases and each node has a thread. This parallelism improvement outweighs the increase of cross-shard transactions. Furthermore, a reconnaissance coalition achieves 334 TPS with 5 threads. Based on combining this observation and \autoref{fig:conflict_in_batch}, we can get that when there are more than 10 reconnaissance coalitions, the pre-execution throughput can exceed the maximum throughput (i.e., 1203 TPS in \autoref{fig:tps}) in \textsc{Prophet}. Therefore, the bottleneck is not in the pre-execution phase.

To further investigate the improvement of communication efficient pre-execution, we evaluate the proportion of communication time in the total time in the pre-execution phase and the result is shown in \autoref{fig:pre_execution_communication}. Note that when the communication is overlapped by the computation, we do not consider the communication time in the total time. From the figure, we can see that the proportion of communication time is less in a node with more threads.

% \begin{figure}[t]
%     \centering
%     \includegraphics[width=0.9\linewidth]{pre_execution_communication.pdf}
%     \caption{Percentage of communication time in the pre-execution in \textsc{Prophet}.}
%     \label{fig:pre_execution_communication}
% \vspace{-18pt}
% \end{figure}

We also evaluate the average total size of cross-shard messages for a transaction in a system with varying number of shards. \autoref{table:detail_query_time} shows that the average total message size increases with the number of shards. It is because the number of shards can result in more cross-shard contract calls in a transaction. Moreover, the increment of message size gradually decreases. Specifically, doubling the shard number from 32 to 64 only increases the message size by 3 Bytes. It is because the number of cross-shard contract calls will be mainly influenced by the number of contract calls when the shard number is more than the number of contract calls.

\begin{table}[t]
	\centering
	\caption{The average total size of cross-shard messages for a transaction in a system with varying number of shards.}
	\scalebox{1}{
    \begin{tabular}{c|c|c|c|c|c|c}
    \hline 
    Shard number & 2 & 4 & 8 & 16 & 32 & 64 \\
    \hline\hline
    Message size (Byte) & 177 & 240 & 267 & 310 & 319 & 322 \\
    \hline
    \end{tabular}
    }
	\label{table:detail_query_time}
    \vspace{-10pt}
\end{table}

% Observe that the above experiments do not evaluate the influence of the malicious nodes in \textsc{Prophet}. 
Although malicious nodes cannot make the invalid transactions be confirmed because of the correction phase, they can occupy the throughput for the valid transactions.
We evaluate the invalid ratio of \textsc{Prophet} with different percentage of malicious nodes (in particular 1\%, 5\%, 12.5\%, and 25\%) and the result is illustrated in \autoref{fig:malicious_with_incentive}.
The malicious nodes have only a limited impact (less than 2\%) on the throughput of \textsc{Prophet}.
% We set the lifetime of each reconnaissance coalition as 10 blocks, which means the honest nodes leave the reconnaissance coalitions involving malicious nodes to form new coalitions every 10 blocks.
\autoref{fig:invalid_fluctuation} shows that the invalid ratio decreases over time because of the gradual construction of honest coalitions.
As discussed in \autoref{sec:pre_execution}, to gain more transaction fees, the honest nodes tend to form and stay in reconnaissance coalitions involving the other honest nodes and leave coalitions involving malicious nodes.
% We can see that when there is no a reputation recording in \autoref{sec:incentive}, the ratio of invalid transactions found in the correction phase increases with the proportion of malicious nodes. In comparison, a system with reputation recording can significantly decrease the invalid abort to a low level. Moreover, 

\begin{figure}[t]
    \centering
	\subfloat[][Average invalid ratio.]{
        \begin{minipage}[t]{0.36\linewidth}
            \centering
            \includegraphics[width=\linewidth]{}
        \end{minipage}
        \label{fig:malicious_with_incentive}
    }
    % \hfill
    % \hspace{-3pt}
    \subfloat[][Fluctuation of invalid ratio for 64 shards.]{
        \begin{minipage}[t]{0.63\linewidth}
            \centering
            \includegraphics[width=\linewidth]{}
        \end{minipage}
        \label{fig:invalid_fluctuation}
    }
    \centering
	\caption{Ratio of invalid transactions in the correction phase.}
    \vspace{-15pt}
\end{figure}

\section{Conclusion}

We present \textsc{Prophet}, a sharding blockchain for conflict-free transactions. \textsc{Prophet} achieves conflict-free by introducing a layer-2 sharding architecture on top of the shards of the existing blockchain sharding. The running of the architecture depends on the cooperation and supervision among reconnaissance coalitions, sequence shard, and shards. \textsc{Prophet} also features several improved designs for ordering efficiency, such as fine-grained ordering, asynchronous correction, and parallel execution. Experimental evaluations show that \textsc{Prophet} 
% reduces the aborted transactions by \% on 1 million real-world Ethereum transactions, and 
boosts the throughput of $3.11\times$ (i.e., 1203 TPS), decreases the latency by $62.9$\%, and achieves nearly no aborts compared with the existing blockchain sharding systems. In the future work, we will study commutative contract instructions for finer-grained ordering and fairness-aware incentive.

% \section*{Acknowledgment}

%\clearpage

\bibliographystyle{IEEEtran}
\bibliography{sample}

\end{document}